\newcommand{\Wtilde}[1]{\stackrel{\sim}{\smash{#1}\rule{0pt}{1.1ex}}}
\theoremstyle{plain}
\newtheorem{thm}{Theorem}[section]
\newtheorem{prop}[thm]{Proposition}
\newtheorem{defn}{Definition}[section]
\theoremstyle{remark}
\begin{document}
\title{Enhancing Secrecy Rates in a wiretap channel}

\author{Shahid M Shah,~\IEEEmembership{Student~Member,~IEEE,}
	and~Vinod~Sharma,~\IEEEmembership{Senior~Member,~IEEE}% <-this % stops a space
\thanks{Part of the paper was presented in 2013 IEEE International Conference on Communications Workshop on Physical Layer Security (ICC), Budapest, Hungary.}
\thanks{Shahid M Shah and Vinod Sharma are with Electrical communication Department, Indian Institute of Science, Bangalore, India.}}
\maketitle

%% Title, authors and addresses

%% use the tnoteref command within \title for footnotes;
%% use the tnotetext command for theassociated footnote;
%% use the fnref command within \author or \address for footnotes;
%% use the fntext command for theassociated footnote;
%% use the corref command within \author for corresponding author footnotes;
%% use the cortext command for theassociated footnote;
%% use the ead command for the email address,
%% and the form \ead[url] for the home page:
%% \title{Title\tnoteref{label1}}
%% \tnotetext[label1]{}
%% \author{Name\corref{cor1}\fnref{label2}}
%% \ead{email address}
%% \ead[url]{home page}
%% \fntext[label2]{}
%% \cortext[cor1]{}
%% \address{Address\fnref{label3}}
%% \fntext[label3]{}

%% use optional labels to link authors explicitly to addresses:
%% \author[label1,label2]{}
%% \address[label1]{}
%% \address[label2]{}

\begin{abstract}
Reliable communication imposes an upper limit on the achievable rate, namely the Shannon capacity. Wyner's wiretap coding, which ensures a security constraint also, in addition to reliability, results in decrease of the achievable rate. To mitigate the loss in the secrecy rate, we propose a coding scheme where we use sufficiently old messages as key and for this scheme prove that multiple messages are secure with respect to (w.r.t.) all the information possessed by the eavesdropper. We also show that we can achieve security in the strong sense. Next we consider a fading wiretap channel with full channel state information of the eavesdropper's channel and use our coding/decoding scheme to achieve secrecy capacity close to the Shannon capacity of the main channel (in the ergodic sense). Finally we also consider the case where the transmitter does not have the instantaneous information of the channel state of the eavesdropper, but only its distribution.
\end{abstract}

\begin{IEEEkeywords}
Physical layer security, Wiretap Channel, Resolvability, Rate loss

\end{IEEEkeywords}

%% \linenumbers

%% main text
\section{Introduction}\label{sec1}

With the advent of wireless communication, the issue of security has gained more importance due to the broadcasting nature of the wireless channel. Wyner \cite{wyner1975wire} proposed a coding scheme to implement security at physical layer for a degraded wiretap channel, which is independent of computational capacity of the adversary. The result of Wyner was generalized to a more general broadcast channel \cite{csiszar1978}. More recently, the growth of wireless communication systems has intensified the interest in implementing security at physical layer (\cite{liang2009information} , \cite{bloch2011physical}, \cite{liu2010securing}).

There is a trade-off between the achievable rate and the level of secrecy to be achieved. In particular, in the coding scheme which achieves the secrecy capacity in a discrete memoryless wiretap channel, the eavesdropper (Eve) is confused with the random messages at a rate close to Eve's channel capacity, thus resulting in loss of transmission rate \cite{wyner1975wire}, \cite{csiszar1978}.

Considerable progress has recently been made to improve the achievable secrecy rate of a wiretap Channel. In \cite{yamamoto1997rate} a wiretap channel with rate-distortion has been studied, wherein the transmitter and the receiver have access to some shared secret key before the communication starts. 
Secret key agreement between the transmitter (Alice) and the legitimate receiver (Bob) has been studied extensively in literature (\cite{ahlswede1993common}-\cite{prabhakaran2012secrecy}). When Alice and Bob have access to a public channel, the authors in \cite{ahlswede1993common} and \cite{maurer1993secret} proposed a scheme to agree on a secret key about which the adversary has less information (leakage rate goes to zero asymptotically).

In \cite{ardestanizadeh2009wiretap} the authors have considered the wiretap channel with secure rate limited feedback. This feedback is used to agree on a secret key, and the overall secrecy rate is enhanced. Under some conditions, the secrecy rate achieved can be equal to the main channel capacity. In \cite{lai2008wiretap} the authors have considered a modulo-additive discrete memoryless wiretap channel with feedback. The feedback is transmitted using the feed-forward channel only. The feedback signal can be used as a secret key. The authors propose a coding scheme which achieves secrecy rate equal to the main channel capacity. Wiretap channel with a shared key was studied in \cite{kang2010wiretap}.

Fading wiretap channel was studied in \cite{gopala2008secrecy}, \cite{liang2008secure} and \cite{bloch2008wireless}.
In \cite{khalil2013opportunistic}, previously transmitted confidential messages are stored in a secret key buffer and used in future slots to overcome the secrecy outage in a fading wiretap channel. In this model the data to be securely transmitted is delay sensitive. In \cite{gungor2013secrecy} the authors also use previously transmitted bits and store in a secret key buffer to leverage the secrecy capacity against deep fades in the main channel. The authors prove that all messages are secure w.r.t. all the outputs of the eavesdropper. The secrecy rate is not enhanced but prevented to decrease when the main channel is worse than the eavesdropper's channel.
A multiplex coding technique has been proposed in \cite{kobayashi2013secure} to enhance the secrecy capacity to the ordinary channel capacity. The mutual information rate between  Eve's received symbols and the (single) message transmitted is shown to decrease to zero as codeword length increases.

In most of the work cited above the security constraint used is \textit{weak secrecy} where if the message to be confidentially transmitted is $W$ and the information that eavesdropper gets in $n$ channel uses is $Z^n$, then $I(W;Z^n)\leq n\epsilon$. From stringent security point of view, this notion is proved to be vulnerable for leaking some useful information to the eavesdropper \cite{bloch2011physical}. Maurer in \cite{maurer1993secret} provided a coding scheme combined with privacy amplification and information reconciliation that achieves secrecy capacity (same as in the weak secrecy case) with a strong secrecy constraint, i.e., $I(W;Z^n)\leq \epsilon$. There are other ways to achieve strong secrecy (see chapter 21 in \cite{csiszar2011information}, \cite{devetak2005private} and \cite{bloch2011strong}).

In this paper, we consider a time slotted wiretap channel. The messages transmitted in a slot are used as a key to encrypt the message in the next slot of communication. Simultaneously, we use the wiretap encoder for another message in the same slot, which enhances the secrecy rate. We ensure that in each slot the currently transmitted message is secure with respect to (w.r.t.) all the output that Eve has received so far.

In next part of this paper we extend this work to the wiretap channel with a secret key buffer, where the key buffer is used to store the previously transmitted secret messages. In this scheme we use the oldest messages stored in the key buffer as a key in a slot and then remove those messages from the key buffer (a previous message is used as a key only once). In each slot this key is used along with a wiretap encoder to enhance the secrecy rate. With this, not only the current message but all the messages sent in recent past are jointly secure w.r.t. all the data received by Eve till now.
We also study a slow fading wiretap channel with the proposed coding scheme. We show that the water-filling power control along with our coding scheme provides the secrecy capacity close to Shannon capacity.

We also show that if \textit{resolvability} based coding scheme \cite{bloch2011strong} is used instead of \textit{wiretap} coding in a  slot, then we can achieve secrecy capacity equal to the main channel capacity in the strong sense also.

Rest of the paper is organised as follows. Channel model and the problem statement are presented in Section \ref{section_channel_model}. Section \ref{section_capacity} provides our coding and decoding scheme and shows that it can provide Shannon capacity for an AWGN wiretap channel. Section \ref{section_slow_fading} extends it to a fading wiretap channel with Eavesdropper's channel information at the transmitter while section \ref{section_no_csi} provides the results when this information is not available at the transmitter. Section \ref{section_conclution} concludes this paper.

A note about the notation. Capital letters, e.g., $W$ will denote a random variable and the corresponding small letter  $w$ its realization. An $n$-length vector $(A_1, A_2,\ldots, A_n)$ will be denoted as $\overline{A}$. Information theoretic notation will be same as in  \cite{el2011network}.
\section{Channel Model and Problem Statement}
\label{section_channel_model}
\begin{figure}[h]
	\setlength{\unitlength}{0.14in} % selecting unit length
	\centering % used for centering Figure
	\begin{picture}(32,15) % picture environment with the size (dimensions)
	% 32 length units wide, and 15 units high.
	\put(2,6.5){\framebox(5,2){Alice}}

	\put(9,4){\framebox(7,6.5){$P_{Y,Z\lvert X}(.\lvert .)$}}
	\put(18,8.5){\framebox(3,2){Bob}}
	\put(18,4.5){\framebox(3,2){Eve}}
	\put(0,7.5){\vector(1,0){2}}\put(7,7.5){\vector(1,0){2}}
	%\put(7,5.5){\vector(1,0){2}}
	%\put(0,5.5){\vector(1,0){2}}
	\put(16,9.5){\vector(1,0){2}}\put(16,5.5){\vector(1,0){2}}
	\put(21,9.5){\vector(1,0){2}}
	\put(21,5.5){\vector(1,0){2}}
	\put(0,7.8) {$\overline{W}_k$}

	\put(20,11) {$\widehat{\overline{W}}_k$}
	\put(20,3.5) {$R_L$}
	\put(7.3,7.8)
	{$\overline{X}_k$} \put(17,10.7) {$\overline{Y}_k$}
	\put(17,3.5){$\overline{Z}_k$}
	%Buffer
		
% second buffer
	\put(2,12){\line(1,0){4}}
	\put(2,13){\line(1,0){4}}
	\put(3,12){\line(0,1){1}}
	\put(4,12){\line(0,1){1}}
	\put(5,12){\line(0,1){1}}
	\put(6,12){\line(0,1){1}}	
	\put(1,12.5){\vector(1,0){1.5}}
	\put(1,10.5){\line(0,1){2}}
	\put(6,12.5){\line(1,0){1.5}}
	\put(7.5,12.5){\vector(0,-1){2}}
	\put(4,11){$B_k$}
	\put(8,11){$\overline{R}_k$}
	\end{picture}
	\caption{Wiretap Channel with secret key buffer} % title of the Figure
	\label{fig:wiretap_buffer} % label to refer figure in text
\end{figure}
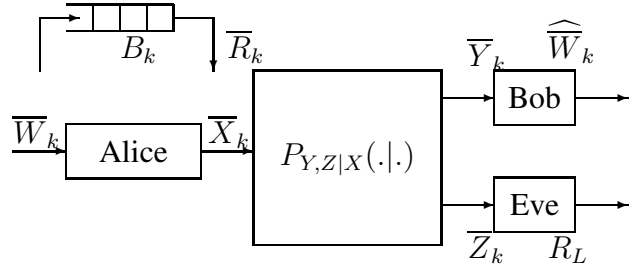

We consider a discrete time, memoryless, degraded wiretap channel, where Alice wants to transmit messages to Bob. We want to keep Eve (who is passively "listening") ignorant of the messages (Fig. 1).

Formally, Alice wants to communicate messages $W \in \mathcal{W}=\{1,2,\ldots,2^{nR_s}\}$ reliably over the wiretap channel to Bob, while ensuring that Eve is not able to decode them, where $R_s$, the secrecy capacity is defined below. $W$ is distributed uniformly over $\mathcal{W}$. At time $i$, $X_i$ is the channel input and Bob and Eve receive the channel outputs  $Y_i$ and $Z_i$ respectively, where $X_i \in \mathcal{X}, Y_i \in \mathcal{Y}, Z_i \in \mathcal{Z}$. The transition probability matrix of the channel is $p(y,z|x)$. The secrecy capacity (\cite{wyner1975wire})
\begin{equation}
R_s=\max_{p(x)}\left[I(X;Y)-I(X;Z)\right],
\end{equation}
is assumed $>0$. 

We consider the system as a time slotted system where each slot consists of $M+1$ minislots and one minislot consists of $n$ channel uses; $M$ being a large positive integer. We are interested in transmitting a sequence $\{W_m,m\geq 1\}$ of $iid$ messages uniformly distributed over $\mathcal{W}$.
Let $C$ be the capacity of Alice-Bob channel and $[x]$ denote the integer part of $x$. For simplicity, we take $\frac{C}{R_s}$ as an integer.
The message $\overline{W}_k$ to be transmitted in slot $k$ consists of one or more messages $W_m$. The codeword for message $\overline{W}_k$ is denoted by $\overline{X}_k$. The corresponding received bits by Eve are $\overline{Z}_k$. To increase the secrecy rate, the transmitter uses previous messages as keys for transmitting the messages in a later slot.
\par
%*****AAAAAAAAAAAAA
We will denote by $P_e^{(n)}$ the probability that any of the messages transmitted in a slot is not received properly by Bob: $P_e^{(n)}=Pr(\overline{W}_k\neq \widehat{W}_k)$ where $\widehat{W}_k$ is the decoded message by Bob in slot $k$.
\par
%BBBBBBBBBBBBB

For secrecy we consider the leakage rate 
\begin{align}
\frac{1}{n}I(\overline{W}_k,\overline{W}_{k-1},\ldots, \overline{W}_{k-N_1};\overline{Z}_1,\ldots,\overline{Z}_k)
\label{new_criteria}
\end{align}
 in slot $k$ where $N_1$ is an arbitrarily large positive integer which can be chosen as a design parameter to take into account the secrecy requirement of the application at hand \footnote[1]{One motivation for this is the law in various countries where old secret documents are declassified after a certain number of years.}. Then of course we should be considering $k>N_1$.This means that the Eve at time $k$ is not interested in \textit{very old} messages transmitted before slot $k-N_1$.
\par

\begin{defn}
Rate $R$ is achievable if there are coding-decoding schemes for each $n$ such that $P_e^{(n)}\rightarrow 0$ and $\frac{1}{n}I(\overline{W}_{k},\ldots,\overline{W}_{k-N_1}; \overline{Z}_1,\ldots,\overline{Z}_{k})\rightarrow 0$ as $n\rightarrow \infty$, where $N_1$ is an arbitrarily large fixed constant.
\end{defn}
In the following we explain our coding scheme. The message $\overline{W}_k$ transmitted in slot $k$ is stored in a key buffer (of infinite length) for later use as a key. After certain bits from the key buffer are used as a key for data transmission, those bits are discarded from the key buffer, not to be used again. Let $B_k$ be the number of bits in the key buffer at the beginning of slot $k$. Let $\overline{R}_k$ be the number of key bits used in slot $k$ from the key buffer. Then
\begin{equation}
B_{k+1}=B_k+\lvert \overline{W}_k\rvert-\overline{R}_k
\end{equation} 
where $\lvert \overline{W}_k\rvert$ denotes the number of bits in $\overline{W}_k$. Now we  explain the coding-decoding scheme used in this paper.

\subsection{Encoder:} To transmit message $\overline{W}_{k}$ in slot $k$, the encoder has two parts
\begin{equation}
f_s:\mathcal{W}\rightarrow \mathcal{X}^{n}, f_d:\mathcal{W}^M \times \mathcal{K} \rightarrow \mathcal{X}^{nM},
\end{equation}
where $\mathcal{K}$ is the set of secret keys generated and $f_s$ is the wiretap encoder, as in \cite{wyner1975wire}.
We use the following encoder for $f_d$: Take binary version of the message and $XOR$ with the binary version of the key. Encode the resulting encrypted message with an optimal usual channel encoder (e.g., an efficient LDPC code).

Assume $B_0=0$. The case of $B_0>0$ can be easily handled in the same way. In the first slot message $\overline{W}_1=W_1$, encoded using the wiretap coding only is transmitted (we use only the first minislot, see Fig. \ref{leakage_slotwise}) . At the end of slot 1, $nR_s$ bits of this message are stored in the key buffer. Thus $B_1=R_sn$. In slot 2, message $\overline{W}_2$ consisting of two messages $(\overline{W}_{21}, \overline{W}_{22})=(W_2,W_3)$ are transmitted. $W_2$ is transmitted via wiretap coding and $W_3$ uses $\overline{W}_1$ as a key and the encrypted message $\overline{W}_1\oplus W_3$ is transmitted via a usual capacity achieving channel code. At the end of slot 2, $R_sn$ bits of $\overline{W}_1$ are removed from the key buffer and $2R_sn$ bits of $\overline{W}_2$ are stored in the key buffer. Since Bob is able to decode $\overline{W}_1$ with a large probability, but not Eve, $\overline{W}_1$ can be an effective key in slot 2. In slot 3, message $\overline{W}_3$ consisting of 3 messages from the source message sequence are transmitted: one message in the first mini slot denoted as $W_{3,1}$ via wiretap coding and two messages denoted together as $\overline{W}_{3,2}$ via encryption with  message $\overline{W}_{2}$ as key bits. In any mini-slot we can transmit upto $C/R_s$ messages via encryption with a key. This is because we cannot transmit reliably at a rate higher than Bob's capacity $C$. Thus, the maximum number of messages that can be transmitted in a slot is $1+\frac{C}{R_s}M \triangleq M_1$. Once we reach this limit, from then onwards $ M_1$ messages will be transmitted in a slot providing the achievable rate $\frac{R_s+CM}{M+1}$ which can be made as close to $C$ as we wish by making $M$ arbitrarily large.

Consequently, in slot $k\leq  M_1$, $k$ messages from the source message stream are transmitted, $(k-1)R_sn$ bits from the key buffer are removed in the beginning of slot $k$ and $kR_sn$ bits are added to the key buffer at the end of slot $k$. The overall message is denoted by $\overline{W}_k=\left(\overline{W}_{k,1},\overline{W}_{k,2}\right)$ with $\overline{W}_{k,1}$ consisting of one source message transmitted via wiretap coding and $\overline{W}_{k,2}$ consisting of $k-1$ messages transmitted via the secret key. From slot $M_1$ onwards $M_1$ messages are transmitted in the above mentioned fashion.

We use the key buffer as a first in first out (FIFO) queue, i.e., at any time the oldest key bits in the buffer are used first. Also $B_k \rightarrow \infty$ as $k \rightarrow \infty$. 
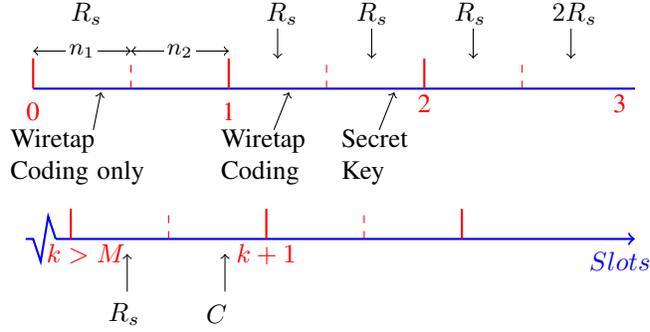
\begin{figure}
		\begin{tikzpicture}
	
	\tikzset{input/.style={}}
	\tikzset{block/.style={rectangle,draw}}
	\tikzstyle{pinstyle} = [pin edge={to-,thick,black}]
	
	\tikzstyle{ann} = [fill=white,font=\footnotesize,inner sep=1pt]
	\draw[blue,thick] (0,0) -- (8,0);
	\draw[blue,thick] (-0.1,-2.0)--(0,-2.0)--(0.1,-2.3)--(0.2,-1.7)--(0.3,-2.0);
	\draw[blue,thick, arrows=->](0.3,-2)--(8,-2);    
	\draw[red,thick] (0,0) -- (0,0.4);
	\draw[red,thin,dashed](1.3,0)--(1.3,0.4);
	\draw[red,thick] (2.6,0) -- (2.6,0.4);
	\draw[red,thin,dashed](3.9,0)--(3.9,0.4);
	\draw[red,thick] (5.2,0) -- (5.2,0.4);
	\draw[red,thin,dashed](6.5,0)--(6.5,0.4);
	%Next line
	\draw[red,thick] (0.5,-2.0) -- (0.5,-1.6);
	\draw[red,thin,dashed] (1.8,-2.0)--(1.8,-1.6);
	\draw[red,thick] (3.1,-2.0) -- (3.1,-1.6);         
	\draw[red,thin,dashed] (4.4,-2.0) -- (4.4,-1.6);         	
	\draw[red,thick](5.7,-2.0)--(5.7,-1.6);
	%\draw[red,thick] (10,0) -- (10,0.4);       
	\node[text width=1.8cm, font=\small ] at (0.6,-0.9) {Wiretap Coding only};     
	\draw[arrows=->] (0.8,-0.55)--(0.9,-0.05);
	\draw[arrows=<->](0,0.5)--(1.3,0.5);   
	\node[ann] at (0.65,0.5) {$n_1$}; 
	\draw[arrows=<->](1.3,0.5)--(2.6,0.5);   
	\node[ann] at (1.95,0.5) {$n_2$}; 
	\node[text width=1.2cm, font=\small] at (3.1,-0.9) {Wiretap Coding}; 
	\node[text width=1.2cm, font=\small] at (4.7,-0.9) {Secret Key};   
	\draw[arrows=->] (3.2,-0.55)--(3.4,-0.05);
	\draw[arrows=->] (4.6,-0.55)--(4.8,-0.05);
	\node[red, font=\small] at (0,-0.3) {0};     \node[red,font=\small] at (2.6,-0.3) {1};  
	\node[red, font=\small] at (5.2,-0.2) {2};     \node[red,font=\small] at (7.8,-0.2) {3}; 
	%Next line 
	\node[red, font=\small] at (0.7,-2.2) {$k>M$};     \node[red,font=\small] at (3.1,-2.2) {$k+1$};    
	\node[blue,font=\small] at (7.8,-2.3) {$Slots$};    
	\node[text width=1cm, font=\small] at (1,1){$R_s$};       
	\node[text width=1.2cm, font=\small ] at (3.7,1) {$R_s$};      
	\node[text width=1.2cm, font=\small ] at (4.9,1) {$R_s$};        
	\node[text width=1.2cm, font=\small ] at (6.2,1) {$R_s$};    
	\node[text width=1.2cm, font=\small ] at (7.5,1) {$2R_s$};  
	%Next line
	
	\node[text width=1.2cm, font=\small ] at (1.6,-3.0) {$R_s$};  
	\node[text width=1.2cm, font=\small ] at (2.9,-3.0) {$C$};          
	\draw[arrows=<-](3.25,0.4)--(3.25,0.8);   
	\draw[arrows=<-](5.85,0.4)--(5.85,0.8);   
	\draw[arrows=<-](4.5,0.4)--(4.5,0.8);       
	\draw[arrows=<-](7.15,0.4)--(7.15,0.8);
	%Next line    
	\draw[arrows=->](1.25,-2.7)--(1.25,-2.2);          
	\draw[arrows=->](2.55,-2.7)--(2.55,-2.2);       
	\end{tikzpicture}
\caption{Coding Scheme to achieve Shannon Capacity in Wiretap Channel}
		\label{leakage_slotwise}
\end{figure}

\subsection*{Decoder} We have a secret key buffer at Bob's decoder also that is used in the same way as at the transmitter. The confidential messages decoded by the decoder are stored in this buffer. For decoding at Bob, in slot 1 the usual wiretap decoder is used (say, a joint-typicality decoder). From slot 2 onwards, for the first mini-slot, we use the wiretap decoder while for the rest of the mini-slots, we use the channel decoder (corresponding to the channel encoder used) and then $XOR$ the decoded message with the key used.

 The above coding-decoding schemes ensure that $P_e^{(n)}\rightarrow 0$ as $n\rightarrow\infty$.  There is a small issue of error propagation due to using the previous message as key: Let $\epsilon_n$ be the message error probability for the wiretap encoder and let $\delta_n$ be the message error probability due to the channel encoder for $\overline{W}_{k}$. Then $\epsilon_n \rightarrow 0$ and $\delta_n \rightarrow 0$ as $n\rightarrow \infty$. For the $k^{th}$ slot, we have $P(\overline{W}_k\neq\widehat{W}_k)\leq Pr($Error in decoding $ \overline{W}_{k1})+ Pr($Error in decoding $ \overline{W}_{k2})+Pr($Error in decoding $ \overline{W}_{k-1}) \leq k\epsilon_n + (k-1)\delta_n$. Thus the error increases with $k$. But restarting (as in slot 1) after some large $k$ slots  as in slot 1 (i.e., again start with one message in the first minislot and no message in the rest of the slot) will ensure that $P(\overline{W}_k\neq\widehat{W}_k)\rightarrow 0$ as $n\rightarrow \infty$.

In the rest of the paper we show that our coding scheme provides an achievable rate with the above secrecy criterion as close to $C$ as needed, for all $k$ large enough. We also note that the following proof is valid for $N_1>1$. The proof for $N_1=1$ is different from the proposed proof and one can refer to \cite{shah2013previous} for details of the proof. We will denote the codeword $\overline{X}_k=(\overline{X}_{k,1},\overline{X}_{k,2})$ and $\overline{Z}_k=(\overline{Z}_{k,1},\overline{Z}_{k,2})$ for the data received by Eve in the first part and the second part of slot $k$.
%%%%%%%%%%%%%%%%%%%%%%%%%%%%%%%%%%%%%%%%%%%%%%%%%%%%%%%%%%%%%%%%%%%%%%%%%%%
%%%%%%%%%%%%%%%%%%%%%%%%%%%%%%%%%%%%%%%%%%%%%%%%%%%%%%%%%%%%%%%%%%%%%%%%%%%

\section{Capacity of Wiretap Channel}
\label{section_capacity}

\begin{thm}
	The secrecy capacity of our coding-decoding scheme is $C$ and it satisfies (\ref{new_criteria}) for any $N_1\geq 0$, for all $k$ large enough.
\end{thm}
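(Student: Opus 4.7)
The theorem bundles rate, reliability, and secrecy claims. Rate and reliability are essentially already argued in the text preceding the theorem: a steady-state slot of $(M+1)n$ channel uses carries $M_1=1+CM/R_s$ source messages of $nR_s$ bits, giving rate $(R_s+CM)/(M+1)\to C$ as $M\to\infty$, and the combined vanishing wiretap-plus-channel-code error probabilities together with the periodic restart keep $P_e^{(n)}\to 0$. The real content is the secrecy bound, which I would approach via a chain-rule decomposition, a per-slot split of $\overline{Z}_j=(\overline{Z}_{j,1},\overline{Z}_{j,2})$ into wiretap and encrypted parts, and a noisy one-time-pad lemma for the encrypted part.

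The key structural fact I would invoke is that the FIFO key buffer grows by $R_sn$ bits each slot, so for $k$ large enough the key $K_j$ used in any slot $j\in[k-N_1,k]$ is drawn from messages $\overline{W}_{\tau(j)}$ with $\tau(j)<k-N_1$, hence disjoint from the set of interest $\overline{W}_{k-N_1},\ldots,\overline{W}_k$. Combined with the independence of $\overline{Z}_j$ from $\overline{W}_{k-N_1},\ldots,\overline{W}_k$ for $j<k-N_1$, the chain rule collapses the leakage into
$$I(\overline{W}_{k-N_1},\ldots,\overline{W}_k;\overline{Z}_1,\ldots,\overline{Z}_k)=\sum_{j=k-N_1}^{k}I(\overline{W}_{k-N_1},\ldots,\overline{W}_k;\overline{Z}_j\mid \overline{Z}_1,\ldots,\overline{Z}_{j-1}).$$
The wiretap mini-slot contributes at most $n_1\epsilon_n$ by Wyner's classical bound. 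For the encrypted mini-slot the crux is the following noisy one-time-pad lemma: if $A$ is uniform and independent of the uniform key $K$ and of Eve's prior view $E$, $|A|=|K|$, and $B$ is the channel output of a codeword for $A\oplus K$, then
$$I(A;B\mid E)\leq I(K;E).$$
The proof is the short chain $H(A\mid B,E)\geq H(A\mid A\oplus K,E)=H(K\mid A\oplus K,E)=H(K\mid E)$, using that $A\oplus K$ is uniform and independent of $K$ given $E$. Applied with $A=W_{j,2}$, $K=K_j$, $E=(\overline{Z}_1,\ldots,\overline{Z}_{j-1},\overline{Z}_{j,1})$, this reduces the encrypted contribution to $I(K_j;E)\leq I(\overline{W}_{\tau(j)};E)$, i.e.\ the leakage about an older message disjoint from the set of interest.

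One then recurses: $\overline{W}_{\tau(j)}$'s leakage is bounded by its own wiretap part plus its own encrypted contribution routed through a still older key, and the chain terminates in a bounded number of steps at the wiretap-only first slot of the current restart epoch, where $I(\overline{W}_1;\overline{Z}_1)\leq n_1\epsilon_n$ is the base case. Summing the finite number of $O(\epsilon_n)$ contributions picked up along the way yields a total leakage of $O(\epsilon_n)$, which divided by $n$ tends to zero, establishing (\ref{new_criteria}). The main obstacle I anticipate is twofold: verifying that the one-time-pad lemma's independence hypothesis is preserved through many layers of recursion, where each fresh $W_{j,2}$ must remain independent of its key \emph{and} of the entire past observation sequence (ultimately true because the source messages are i.i.d.\ and the FIFO rule consumes each key bit only once); and quantifying the age of the oldest key bit so that the ``$k$ large enough'' quantifier can be made uniform and the restart period chosen accordingly.
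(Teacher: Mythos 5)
Your overall strategy is sound and reaches the same conclusion, but the technical route differs from the paper's in two respects. The paper does not decompose slot by slot; it first splits the messages of interest globally into their wiretap-coded parts $\overline{W}_{j,1}$ and their encrypted parts $\overline{W}_{j,2}$, handling the former in Lemma~1 (by repeated use of independence of disjoint slots plus Wyner's bound $I(\overline{W}_{j,1};\overline{Z}_{j,1})\leq n\epsilon$) and the latter in Lemma~2. For the encrypted part, where you invoke a noisy one-time-pad lemma $I(A;B\mid E)\leq I(K;E)$ proved by an entropy chain, the paper instead uses the symmetry identity $I(\widehat{W}_2;\widehat{Z}_2\mid\widehat{Z}_1)=I(\widehat{W}_2;\widehat{Z}_1\mid\widehat{Z}_2)$ (valid because both unconditional mutual informations vanish) and then augments with the key-messages $\overline{W}_A$ via a Markov chain to land on $I(\widehat{W}_A;\widehat{Z}_1)\leq N_1 n\epsilon$, asserted directly from wiretap coding. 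Both arguments implement the same mechanism --- the ciphertext reveals no more about the plaintext than Eve already knows about the key --- but your version makes the subsequent recursion explicit (the key's own leakage is again part wiretap, part encryption with a still older key, terminating at the restart), whereas the paper stops at the assertion on $I(\widehat{W}_A;\widehat{Z}_1)$; your recursion is arguably more self-contained, at the price of a constant governed by the restart period rather than by $N_1$ alone, and your one-time-pad lemma is a cleaner reusable tool. Two details to tighten: your per-slot chain-rule terms involve the \emph{joint} message set $(\overline{W}_{k-N_1},\ldots,\overline{W}_k)$, not just the slot-$j$ message, so applying the one-time-pad lemma requires first peeling off the other messages (and the keys $\overline{W}_A$) via the same independence/Markov bookkeeping the paper performs; and the reduction of $I(\cdot;\overline{Z}_{j,1}\mid\overline{Z}_1,\ldots,\overline{Z}_{j-1})$ to the unconditional Wyner bound needs the joint independence argument of Lemma~1 rather than a bare citation of "Wyner's classical bound."
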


\textbf{\textit{Proof}}: As mentioned in the last section, by using our coding-decoding scheme, using wiretap coding and secure key, in any slot $k$, Bob is able to decode the message $\overline{W}_k$ with probability $P_e^{(n)}\rightarrow 0$ as $n \rightarrow \infty$.

Fix $N_1\geq 0$ and a small $\epsilon>0$. Due to wiretap coding, we can choose $n$ such that $I(\overline{W}_{k,1};\overline{Z}_{k,1})\leq n\epsilon$ for all $k\geq 1$. Since the key buffer $B_k\rightarrow \infty$, we use the oldest key bits in the buffer first and in any slot do not use more than $MC$ key bits, after sometime (say $N_2$ slots) for all $k\geq N_2$ we will be using key bits only from the messages $\overline{W}_1, \overline{W}_2,\ldots,\overline{W}_{k-N_1-1}$ for messages $\overline{W}_k,\overline{W}_{k-1},\ldots,\overline{W}_{k-N_1}$.  Furthermore,
\begin{align}
&I(\overline{W}_k, \overline{W}_{k-1},\ldots,\overline{W}_{k-N_1};\overline{Z}_1,\ldots, \overline{Z}_k) \nonumber \\
&=I(\overline{W}_{k,1}, \overline{W}_{k-1,1},\ldots,\overline{W}_{k-N_1,1};\overline{Z}_1,\ldots, \overline{Z}_k) \nonumber \\
&~~+I(\overline{W}_{k,2},\ldots,\overline{W}_{k-N_1,2};\overline{Z}_1,\ldots, \overline{Z}_k|\overline{W}_{k,1},\ldots,\overline{W}_{k-N_1,1}). \label{eqn_thm_1}
\end{align}  
We show in Lemma 1 that
\begin{equation}
I(\overline{W}_{k,1}, \overline{W}_{k-1,1},\ldots,\overline{W}_{k-N_1,1};\overline{Z}_1,\ldots, \overline{Z}_k)\leq (N_1+1)n\epsilon,  \label{eqn_thm_2}
\end{equation}
and in Lemma 2 that

\begin{equation}
I(\overline{W}_{k,2},\ldots,\overline{W}_{k-N_1,2};\overline{Z}_1,\ldots, \overline{Z}_k|\overline{W}_{k,1},\ldots,\overline{W}_{k-N_1,1})=N_1\epsilon.      \label{eqn_thm_3}
\end{equation}
From (\ref{eqn_thm_1}), (\ref{eqn_thm_2}) and (\ref{eqn_thm_3})

\begin{equation}
\frac{1}{n}I(\overline{W}_k, \overline{W}_{k-1},\ldots,\overline{W}_{k-N_1};\overline{Z}_1,\ldots, \overline{Z}_k)\leq (2N_1+1)\epsilon.
\end{equation}

By fixing $N_1$, we can take $\epsilon$ small enough such that $(N_1+1)\epsilon$ is less than any desired value.       $\mspace{230mu}$         $\square$

So far we have been considering an infinite buffer system. But an actual system will have a finite buffer. Now we compute the key buffer length needed for our system. 
\par
If we fix the probability of error for Bob and the upper bound on equivocation, then we can get the code length $n$ needed. Also, from the secrecy requirement, we can fix $N_1$. Once $n$ and $N_1$ are fixed, to ensure that eventually, in slot $k$ we will use a key from messages before time $k-N_1$, the key buffer size should be $\geq CMN_1n$ bits. Also, since in each slot, the key buffer length increases by $nR_s$ bits, the key buffer will have at least $CMN_1n$ bits after slot $\frac{CMN_1}{R_s}$. In the finite buffer case eventually key buffer will overflow. We should loose only the latest bits arriving in any slot (not the bits already stored).

We can obtain Shannon capacity even with strong secrecy. For this instead of using the usual wiretap coding of Wyner in the first minislot of each slot we use the \textit{resolvability} based coding scheme \cite{bloch2011strong}. Then $I(\overline{W}_{k,1};\overline{Z}_{k,1})\leq\epsilon$ instead of $I(\overline{W}_{k,1};\overline{Z}_{k,1})\leq n\epsilon$ for $n$ large enough. Then from proof of Theorem 1, our coding-decoding scheme provides
\begin{equation}
I(\overline{W}_k,\ldots,\overline{W}_{k-N_1};\overline{Z}_1,\ldots,\overline{Z}_k)\leq \epsilon.
\end{equation}

\section{AWGN Slow Fading Channel}
\label{section_slow_fading}
%\begin{figure}
%\begin{center}
%\centering 
%\epsfig{figure=buffer,height=4cm,width=9cm}
%\rule{-1cm}{-1cm}
%\caption{The Wiretap channel}
%\label{fig1}
%\end{center}
%\vspace{-1.0 cm}
%\end{figure}
We consider a slow flat fading AWGN channel (Fig. 1), where the channel gains in a slot are constant. The channel outputs are,
\begin{equation}
Y_i=\Wtilde{H}X_i+N_{1i},
\end{equation}
\begin{equation}
Z_i=\Wtilde{G}X_i+N_{2i},
\end{equation}
where $X_i$ is the channel input, $\{N_{1i}\}$ and $\{N_{2i}\}$ are independent, identically  distributed $(i.i.d.)$ sequences independent of each other and $\{X_i\}$ with distributions $\mathcal{N}(0,\sigma_1^2)$ and $\mathcal{N}(0,\sigma_2^2)$ respectively, and $\mathcal{N}(a,b)$ denotes Gaussian distribution with mean $a$ and variance $b$. Also $\Wtilde{H}$ and $\Wtilde{G}$ are the channel gains to Bob and Eve respectively in the given slot. Let $H=\lvert \Wtilde{H}\rvert ^2$ and $G=\lvert \Wtilde{G}\rvert ^2$.

The channel gains $H_k$ and $G_k$ in slot $k$ are constant and sequences $\{H_k,k\geq 0\}$ and $\{G_k,k\geq 0\}$ are $iid$ and independent of each other. We assume that $(H_k,G_k)$ is known at the transmitter and Bob at the beginning of slot $k$. The notation and assumptions are same as in Section 3. Power $P(H_k,G_k)$ is used in slot $k$ for transmission. There is an average power constraint,
\begin{equation}
\limsup_{k\rightarrow \infty}\frac{1}{k}\sum_{m=1}^{k}\mathsf{E}\left[P(H_k,G_k)\right] \leq \overline{P}.  \label{pow_cons}
\end{equation}

Given $H_k, G_k,$ and $B_k$ at the beginning of slot $k$, Alice needs to decide on $P(H_k,G_k)$ and $\overline{R}_k$ such that the resulting average transmission rate $\limsup_{k\rightarrow \infty}\frac{1}{k}\sum_{l=1}^{k}r_l$ is maximized subject to (\ref{pow_cons}), (\ref{new_criteria}) and $P_e^n\rightarrow 0$, where $r_k$ is the transmission rate in slot $k$. We compute this capacity for $P(H_k>G_k)>0$; otherwise, the capacity is zero. At the end of slot $k$, $n(M+1)r_k \triangleq \overline{r}_k$ bits are stored in the key buffer for later use as a key while $\overline{R}_k$ bits have been removed. Thus, the buffer size evolves as,
\begin{equation}
B_{k+1}=B_k+\overline{r}_k-\overline{R}_k.    \label{que}
\end{equation}

For convenience, we define 

\begin{equation}
C(P(H,G))=\frac{1}{2}\log \left(1+\frac{HP(H,G)}{\sigma_1^2}\right),
\end{equation}
and
\begin{equation}
C_e(P(H,G))=\frac{1}{2}\log \left(1+\frac{GP(H,G)}{\sigma_2^2}\right),
\end{equation}
where $P(H,G)$ is the power used when the channel gains are $H$ and $G$. Unlike Sections II and III where initial messages $W$ are with cardinality $2^{nR_s}$, we use adaptive coding and power control.
Then, we have the following theorem.
\\
\begin{thm}
	The secrecy rate 
	
	\begin{equation}
	\label{theorem_1}
	C_s=\mathsf{E}_{H}\left[C(P(H))\right]
	\end{equation}
	
	is achievable if $Pr(H_k>G_k)>0$, where $P(H)=P(H,G)$ is the water-filling power policy for Alice $\rightarrow$ Bob channel.
\end{thm}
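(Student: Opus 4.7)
The plan is to reduce Theorem 2 to the constant-channel analysis of Theorem 1, combined with a water-filling argument for the average power constraint and a queueing argument for the key buffer. First, I would choose the power policy $P(H_k)$ to be the main-channel water-filling allocation $P(H)=\left[\lambda - \sigma_1^2/H\right]^+$, with $\lambda$ selected so that $\mathsf{E}[P(H)]=\overline{P}$; this policy ignores $G$ and is the one that achieves the ergodic main-channel capacity $\mathsf{E}_H[C(P(H))]$ on a non-secret link. Next, in slot $k$ I would keep the same $(M+1)$-minislot architecture as in Section~\ref{section_capacity}: the first minislot carries a Wyner wiretap codebook designed for the instantaneous triple $(H_k,G_k,P(H_k))$ at rate $R_{s,k}=[C(P(H_k))-C_e(P(H_k))]^+$, while the remaining $M$ minislots each carry an ordinary capacity-achieving codebook at rate $C(P(H_k))$ applied to the bitwise XOR of a fresh source message with a key pulled FIFO from the buffer. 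When $H_k\le G_k$ so that $R_{s,k}=0$, the first minislot is also used for encrypted transmission, so that in every slot the scheme operates at rate $C(P(H_k))$ per channel use, subject to key availability.

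For the buffer I would track the bookkeeping slot-by-slot: per slot, $n(R_{s,k}+MC(P(H_k)))$ confidential bits are delivered to Bob and appended to the buffer, while $n\,MC(P(H_k))$ bits are consumed as keys, yielding a net drift of $nR_{s,k}\ge 0$. Since $\Pr(H_k>G_k)>0$ and $(H_k,G_k)$ are i.i.d., $\mathsf{E}[R_{s,k}]>0$, and the strong law of large numbers gives $B_k/k\to n\,\mathsf{E}[R_s(H,G)]>0$ almost surely, so $B_k\to\infty$. Consequently, beyond some a.s.\ finite slot, the buffer contains enough key bits drawn from messages older than $N_1$ slots to service the instantaneous demand $n\,MC(P(H_k))$ without starvation; the initial transient can be absorbed either by a warm-up phase that uses wiretap coding only, or by observing that a vanishing fraction of outage slots cannot affect a limsup-based rate.

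Once the buffer is operational, secrecy follows exactly as in Theorem~1. The leakage $I(\overline{W}_k,\ldots,\overline{W}_{k-N_1};\overline{Z}_1,\ldots,\overline{Z}_k)$ splits into a wiretap-coded contribution, bounded by $(N_1+1)n\epsilon$ via Lemma~1 applied slotwise with codebooks tuned to $(H_k,G_k,P(H_k))$, and an encrypted contribution, bounded by $N_1\epsilon$ via Lemma~2 using the one-time-pad property together with the FIFO freshness of the keys. For the rate, per slot we deliver $n(R_{s,k}+MC(P(H_k)))$ bits over $n(M+1)$ channel uses, so averaging over the i.i.d.\ fading and letting $M\to\infty$ yields an average rate $\mathsf{E}_H[C(P(H))]=C_s$. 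The main obstacle will be making the buffer-stability argument quantitative while simultaneously preserving the $N_1$-slot-old requirement on the keys: one must show that the FIFO stream of keys older than $k-N_1$ can meet the peaky per-slot demand $n\,MC(P(H_k))$ for every large $k$, which calls for a drift-plus-concentration argument together with a standard truncation of the power and rate levels in case $H$ is unbounded.
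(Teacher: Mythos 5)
Your proposal follows essentially the same route as the paper: water-filling on the main channel, the same $(M+1)$-minislot architecture with wiretap coding in the first minislot and one-time-pad-plus-capacity-achieving codes in the rest, a positive-drift argument giving $B_k \uparrow \infty$ a.s.\ from $\Pr(H_k>G_k)>0$, and secrecy inherited from Lemmas~1 and~2. The one issue you leave open --- guaranteeing that the FIFO keys serving slots $k-N_1,\ldots,k$ are all older than slot $k-N_1$ despite the peaky demand $nMC(P(H_k))$ --- is resolved in the paper not by a concentration argument but by simply capping the per-slot key consumption at $\min\bigl(B_k,\overline{M},nMC(P(H_k))\bigr)$ for an arbitrarily large constant $\overline{M}$: once $B_k\geq \overline{M}N_1$ the freshness condition holds deterministically, and the rate loss from the cap vanishes as $\overline{M}\to\infty$, which is a cleaner fix than the drift-plus-concentration and truncation machinery you anticipate needing.
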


\begin{proof}
	We follow the coding-decoding scheme of Section-\ref{section_capacity} with the following change to account of the fading.
	
	Each slot has $M+1$ mini-slots. We fix a power control policy $P(H,G)$ satisfying average power constraint. We transmit for the first time when $H_k>G_k$ and use wiretap coding in all the $(M+1)$ minislots. We store all the transmitted bits in the key buffer also.
	
	From next slot onwards, we use the first mini-slot for wiretap coding (if $H_k>G_k$) and rest of the mini-slots for transmission via secret key (if $H_k\leq G_k$ use only $M$ minislots for transmission with secret key in slot $k$ and do not use the first minislot, with $\overline{R}_k=\min\left(B_k,MC(P(H_k,G_k))n\right)$. In every slot we remove $R_k$ bits and add $\overline{r}_k\geq R_k$ bits to the key buffer. Since $Pr\left(H_k>G_k\right)>0$,  $Pr(\overline{r}_k>\overline{R}_k)>0$. Thus $B_k \uparrow \infty$ $a.s.$ and eventually, in every slot we will transmit in the first mini-slot at rate 
	\begin{equation}
	\left[C\left(P(H_k,G_k)\right)-C_e\left(P(H_k,G_k)\right)\right]^+
	\end{equation}
	and in the rest of the mini-slots at rate $C\left(P(H_k,G_k)\right)$ with arbitrarily large probability.
	
	The average rate in a slot can be made as close to $C\left(P(H_k,G_k)\right)$ as we wish by making $M$ large enough. Thus, the rate for this coding scheme is maximized by \textit{water filling}.
	
	Now we want to ensure that  for $k$ large enough, for messages $\left(\overline{W}_k,\overline{W}_{k-1},\ldots,\overline{W}_{k-N_1}\right)$ we use only keys from $\left(\overline{W}_1,\ldots,\overline{W}_{k-N_1-1}\right)$. It can be ensured if we do not use more than $\overline{M}$ key bits in a slot and from $k-N_1$ onward the key queue length $\geq \overline{M}N_1$ bits, where the constant $\overline{M}$ can be chosen arbitrarily large. Thus we modify the above scheme such that we use $\min\left(B_k,\overline{M}, nMC(P(H_k))\right)$ key bits in a slot instead of $\min(B_k,nMC(P(H_k)))$ bits. By making $\overline{M}$ as large as needed, we can get arbitrarily close to the water filling rate.
\end{proof}

Strong secrecy can be achieved as for the non-fading case in Section-\ref{section_capacity}. Also, to attain the required reliability and secrecy, the key buffer length required can be obtained as in Section \ref{section_capacity} by using $\overline{M}$ (defined in the proof of Theorem 2).

\section{Fading Wire-tap With no CSI of Eavesdropper}
\label{section_no_csi}
In this section we assume that the transmitter knows only the channel state of Bob at time $k$ but not $G_k$, the channel state of Eve. This is more realistic because Eve is a passive listener. Now we modify our fading model. Instead of $(H_k,G_k)$ being constant during a slot (slow fading), the coherence time of $(H_k,G_k)$ is much smaller than the duration $n$ of a minislot. Then we can use the coding-decoding scheme of \cite{gopala2008secrecy} in the first minislot with secrecy rate $R_s$ and $I(\overline{W}_{k,1};\overline{Z}_1,\ldots,\overline{Z}_k)\leq n\epsilon$, where
\begin{equation}
R_s=\frac{1}{2}\mathsf{E}_{H,G}\left\{\Biggl [ \log\left(1+\frac{HP(H)}{\sigma_1^2}\right)-\log\left(1+\frac{GP(H)}{\sigma_2^2}\right)\Biggr]^+\right\}. \label{secrecy_capacity_fading}
\end{equation}
Now we have the following proposition

\begin{prop}
	Secrecy capacity equal to the main channel capacity  without CSI of Eve at the transmitter
	\begin{equation}
	C=\frac{1}{2}\mathsf{E}_H \left[ \log \left(1+\frac{HP(H)}{\sigma_1^2}\right) \right]
	\end{equation}
	is achievable subject to power constraint $\mathsf{E}_H \left[ P(H) \right] \leq \bar{P}$, where $P(H)$ is the waterfilling policy.
\end{prop}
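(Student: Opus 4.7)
The plan is to adapt the architecture and argument of Theorem 2, replacing the first-minislot wiretap code by the variable-rate scheme of \cite{gopala2008secrecy}, which as noted just before the proposition needs only $H_k$ (and the distribution of $G_k$) at Alice. Fix the waterfilling power policy $P(H)$; this policy attains $C=\tfrac12\mathsf{E}_H[\log(1+HP(H)/\sigma_1^2)]$ under $\mathsf{E}_H[P(H)]\leq\bar P$, and involves no reference to $G_k$.

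First I would specify the per-slot code. In the first minislot of slot $k$, transmit $\overline{W}_{k,1}$ via the code of \cite{gopala2008secrecy}, which achieves secrecy rate $R_s$ of (\ref{secrecy_capacity_fading}) with $I(\overline{W}_{k,1};\overline{Z}_{k,1})\leq n\epsilon$ and vanishing error at Bob. Enqueue the $nR_s$ bits of $\overline{W}_{k,1}$ into the FIFO key buffer. In each of the remaining $M$ minislots, XOR the payload with key bits drawn from the head of the buffer and transmit through a capacity-achieving code for the main channel at rate arbitrarily close to $\tfrac12\log(1+H_kP(H_k)/\sigma_1^2)$; the payload bits are also enqueued. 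Cap the per-slot key withdrawal at $\min(B_k,\overline{M},nMC(P(H_k)))$, exactly as in the proof of Theorem 2.

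Next I would verify reliability and buffer growth. Reliability follows by the same union-bound estimate used in Theorem 1: the per-slot error is the sum of the Gopala code error, the $M$ channel-decoding errors and the error of the key (a previously decoded message), all $o(1)$ in $n$, with periodic restarts suppressing propagation. For buffer growth, each slot deposits $nR_s$ secure bits from the first minislot while withdrawing at most $\overline{M}$; finiteness of $\overline{M}$ together with $R_s>0$ yields $B_k\uparrow\infty$ a.s., so eventually the key bits used for $\overline{W}_k,\ldots,\overline{W}_{k-N_1}$ all originate from $\overline{W}_1,\ldots,\overline{W}_{k-N_1-1}$. The average rate per slot is $\bigl(R_s+M\,\mathsf{E}_H[\tfrac12\log(1+HP(H)/\sigma_1^2)]\bigr)/(M+1)$, which tends to $C$ as $M$ and $\overline{M}$ are taken large.

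The secrecy bound is the final step. Reusing the decomposition (\ref{eqn_thm_1}), I would apply Lemma 1 to the first-minislot term to obtain at most $(N_1+1)n\epsilon$, and Lemma 2 to the conditional term, whose one-time-pad argument is indifferent to Eve's CSI. The main obstacle I expect is promoting the single-slot guarantee $I(\overline{W}_{k,1};\overline{Z}_{k,1})\leq n\epsilon$ of the Gopala code to the cross-slot bound $I(\overline{W}_{k,1},\ldots,\overline{W}_{k-N_1,1};\overline{Z}_1,\ldots,\overline{Z}_k)\leq (N_1+1)n\epsilon$ needed by Lemma 1; this should follow from mutual independence of codebooks, fading and noise across slots via a chain-rule unpeeling, but it is the only step not immediately inherited from the Theorem 2 argument.
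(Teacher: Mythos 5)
Your proposal is correct and follows essentially the same route as the paper, which simply invokes the Gopala et al. scheme in the first minislot (valid without Eve's CSI because the fading is ergodic within a minislot) and then applies the Section 3 key-buffer argument verbatim; your write-up is in fact considerably more explicit than the paper's two-sentence proof, and the cross-slot step you flag as the only non-inherited part is exactly what Lemma 1's independence-across-slots chain-rule argument already supplies. The only nitpick is that in the fast-fading model of this section the encrypted minislots should be coded over the fading at the ergodic rate $\mathsf{E}_H[\tfrac12\log(1+HP(H)/\sigma_1^2)]$ rather than at an instantaneous rate indexed by $H_k$, but your final rate expression already reflects this.
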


\begin{proof}
	Since each mini-slot is of long duration compared to the coherence time of the fading process $(H_k,G_k)$, the coding scheme of \cite{gopala2008secrecy} can be used without the CSI of Eve in the first minislot of each slot. This can achieve secrecy capacity 
	\begin{equation}
	C_s=\mathsf{E}_{H,G}\left[\frac{1}{2}\log\left(\frac{1+HP(H)/\sigma_1^2}{1+GP(H)/\sigma_2^2}\right)^{+}\right]
	\end{equation}
	subject to the power constraint $\mathsf{E}_{H,G}\left[P(H)\right] \leq \bar{P}$, with $I(\overline{W}_k;\overline{Z}_k)\leq n\epsilon.$ Now we can use the coding-decoding scheme of Section 3 to achieve the secrecy capacity equal to the main channel capacity
	\begin{equation}
		C=\frac{1}{2}\mathsf{E}_H \left[ \log \left(1+\frac{HP(H)}{\sigma_1^2}\right) \right].
		\end{equation}
\end{proof}

\section{Conclusions}

\label{section_conclution}

In this paper we have achieved secrecy rate equal to the main channel capacity of a wiretap channel by using the previous secret messages as a key for transmitting the current message. We have shown that not only the current message being transmitted, but all messages transmitted in last $N_1$ slots are secure w.r.t. all the outputs of the eavesdropper till now, where $N_1$ can be taken arbitrarily large. We have extended this result to fading wiretap channels when CSI of Eve may or may not be available to the transmitter. The optimal power control is water filling itself.

\section{Appendices}
\section{Proofs of Lemmas}
\textit{Lemma 1}: The following holds
\begin{equation}
I(\overline{W}_{k,1},\overline{W}_{k-1,1},\ldots,\overline{W}_{k-N_1,1};\overline{Z}_1,\overline{Z}_2,\ldots,\overline{Z}_k)\leq (N_1+1)n\epsilon.
\end{equation}
\textit{Proof}: We have,
\begin{align}
&I(\overline{W}_{k,1},\overline{W}_{k-1,1},\ldots,\overline{W}_{k-N_1,1};\overline{Z}_1,\overline{Z}_2,\ldots,\overline{Z}_k)  \nonumber \\
&=I(\overline{W}_{k,1};\overline{Z}_1,\overline{Z}_2,\ldots,\overline{Z}_k)  \nonumber \\
&~~+I(\overline{W}_{k-1,1};\overline{Z}_1,\overline{Z}_2,\ldots,\overline{Z}_k|\overline{W}_{k,1})+\ldots+  \nonumber \\
&~~+I(\overline{W}_{k-N_1,1};\overline{Z}_1,\overline{Z}_2,\ldots,\overline{Z}_k|\overline{W}_{k,1},\ldots, \overline{W}_{k-N_1+1,1}).  \label{eqn_lem1_1}
\end{align}
But
\begin{align}
&I(\overline{W}_{k,1};\overline{Z}_1,\overline{Z}_2,\ldots,\overline{Z}_k) \nonumber \\
&=I(\overline{W}_{k,1};\overline{Z}_{k,1})+I(\overline{W}_{k,1};\overline{Z}_1,\ldots,\overline{Z}_{k-1},\overline{Z}_{k,2}|\overline{Z}_{k,1})  \nonumber \\
&\leq  n\epsilon+0, \label{eqn_lem1_2}
\end{align}
because $(\overline{Z}_{1},\ldots,\overline{Z}_{k-1},\overline{Z}_{k,2}) \perp (\overline{Z}_{k,1}, \overline{W}_{k,1})$, where $X\perp Y$ denotes that random variable $X$ is independent of $Y$.

Next consider
\begin{align}
&I(\overline{W}_{k-1,1};\overline{Z}_1,\overline{Z}_2,\ldots,\overline{Z}_k|\overline{W}_{k,1})  \nonumber \\
&=I(\overline{W}_{k-1,1};\overline{Z}_{k-1,1}|\overline{W}_{k,1}) \nonumber \\
&+I(\overline{W}_{k-1,1};(\overline{Z}_1,\ldots,\overline{Z}_k)-\overline{Z}_{k-1,1}|\overline{W}_{k,1},\overline{Z}_{k-1,1}),   \label{eqn_lem1_3}
\end{align}
where $(\overline{Z}_1,\ldots,\overline{Z}_k)-\overline{Z}_{k-1,1}$ denotes the sequence $(\overline{Z}_1,\ldots,\overline{Z}_k)$ without $\overline{Z}_{k-1,1}$. However, 
\begin{align}
&I(\overline{W}_{k-1,1};\overline{Z}_{k-1,1}|\overline{W}_{k,1}) \nonumber \\
&=I(\overline{W}_{k-1,1};\overline{Z}_{k-1,1}) \leq n\epsilon.  \label{eqn_lem1_4}
\end{align}
Also, because $(\overline{Z}_1,\ldots,\overline{Z}_{k-2})$ is independent of $(\overline{W}_{k-1,1},\overline{W}_{k,1},\overline{Z}_{k-1,1})$,
\begin{align}
&I(\overline{W}_{k-1,1};(\overline{Z}_1,\ldots,\overline{Z}_k)-\overline{Z}_{k-1,1}|\overline{W}_{k,1},\overline{Z}_{k-1,1})  \nonumber \\
&=I(\overline{W}_{k-1,1};\overline{Z}_1,\ldots,\overline{Z}_{k-2}|\overline{W}_{k,1},\overline{Z}_{k-1,1}) \nonumber \\
&~+I(\overline{W}_{k-1,1};\overline{Z}_k,\overline{Z}_{k-1,2}|\overline{W}_{k,1},\overline{Z}_{k-1,1},\overline{Z}_1,\ldots \overline{Z}_{k-2})   \\
&\overset{(a)}{=}0+I(\overline{W}_{k-1,1};\overline{Z}_{k,1}|\overline{W}_{k,1},\overline{Z}_{k-1,1},\overline{Z}_1,\ldots \overline{Z}_{k-2}) \nonumber \\
&~+I(\overline{W}_{k-1,1};\overline{Z}_{k,2},\overline{Z}_{k-1,2}|\overline{W}_{k,1},\overline{Z}_{k-1,1},\overline{Z}_1,\ldots \overline{Z}_{k-2},\overline{Z}_{k,1}).
\end{align}
Furthermore, since $(\overline{W}_{k-1,1},\overline{W}_{k,1},\overline{Z}_{k,1},\overline{Z}_{k-1,1}) \perp (\overline{Z}_1,\ldots, \overline{Z}_{k-2})$
we have
\begin{align}
&I(\overline{W}_{k-1,1};Z_{k,1}|\overline{W}_{k,1},\overline{Z}_{k-1,1},\overline{Z}_1,\ldots, \overline{Z}_{k-2})  \nonumber \\
&=I(\overline{W}_{k-1};Z_{k,1}|\overline{W}_{k,1},\overline{Z}_{k-1,1}).    \label{eqn_lem1_5}
\end{align} 
Using the fact that $(\overline{W}_{k-1},\overline{Z}_{k-1,1}) \perp (\overline{W}_{k,1},\overline{Z}_{k,1})$
we can directly show that the right side equals zero.

Let $A$ denote the indices of the slots in which messages are transmitted which are used as keys for transmitting $\overline{W}_{k,2}$ and $\overline{W}_{k-1,2}$.
Since
\begin{align}
&(\overline{Z}_{k,2},\overline{Z}_{k-1,2})\leftrightarrow (\overline{W}_{k-1,1},\overline{W}_{A}) \nonumber \\
&~~~~~~~~~~~~~~~~~~~~\leftrightarrow (\overline{W}_{k,1},\overline{Z}_{k-1,1},\overline{Z}_{k,1},\overline{Z}_1,\ldots,\overline{Z}_{k-2}), \label{markov1}
\end{align}
where $X\leftrightarrow Y\leftrightarrow Z$ denotes that $\{X,Y,Z\}$ forms a Markov chain, we have
\begin{align}
&I(\overline{W}_{k-1,1};\overline{Z}_{k,2},\overline{Z}_{k-1,2}|\overline{W}_{k,1},\overline{Z}_{k-1,1},\overline{Z}_1,\ldots,\overline{Z}_{k-2},\overline{Z}_{k,1})  \nonumber  \\
&\leq I(\overline{W}_{k-1,1},\overline{W}_{A};\overline{Z}_{k,2},\overline{Z}_{k-1,2}|\overline{W}_{k,1},\overline{Z}_{k-1,1},\notag \\
&\mspace{280mu} \overline{Z}_1,\ldots,\overline{Z}_{k-2},\overline{Z}_{k,1})   \nonumber \\
&\overset{(a)}{\leq} I(\overline{W}_{k-1,1},\overline{W}_A;\overline{Z}_{k,2},\overline{Z}_{k-1,2})   \nonumber \\
\end{align}
\begin{align}
&\leq I(\overline{W}_{k-1,1};\overline{Z}_{k,2},\overline{Z}_{k-1,2})+I(\overline{W}_A;\overline{Z}_{k,2},\overline{Z}_{k-1,2}|\overline{W}_{k-1,1}) \notag \\
&\overset{(b)}{=} 0+I(\overline{W}_A;\overline{Z}_{k,2},\overline{Z}_{k-1,2})\overset{(c)}{=}0,      \label{eqn_lem1_6}
\end{align}
where $(a)$ follows from (\ref{markov1}), $(b)$ follows since $(\overline{W}_{k-1,1},\overline{Z}_{k-1,1})\perp (\overline{W}_A, \overline{Z}_{k,2}, \overline{Z}_{k-1,2})$ and $(c)$ follows since $\overline{W}_A \perp Z_{k,2},Z_{k-1,2}$.

From  (\ref{eqn_lem1_3}), (\ref{eqn_lem1_4}), (\ref{eqn_lem1_5}), (\ref{eqn_lem1_6}),
\begin{equation}
I(\overline{W}_{k-1,1};\overline{Z}_1,\overline{Z}_2,\ldots, \overline{Z}_k|\overline{W}_{k,1})\leq n\epsilon.
\end{equation}

We can similarly show that the other terms on the right side of (\ref{eqn_thm_1}) are also upper bounded by $n\epsilon$. This proves the lemma.$\square$

\textit{Lemma 2}: The following holds
\begin{align}
&I(\overline{W}_{k,2},\overline{W}_{k-1,2},\ldots \overline{W}_{k-N_1,2};\overline{Z}_1,\ldots, \overline{Z}_k| \overline{W}_{k,1},\ldots,\overline{W}_{k-N_1,1})\leq N_1n\epsilon. \label{lemma_2_criteria}
\end{align}
\textit{Proof:} We have
\begin{align}
I(\overline{W}_{k,2}&,\ldots,\overline{W}_{k-N_1,2};\overline{Z}_1,\ldots,\overline{Z}_k|\overline{W}_{k,1},\ldots,\overline{W}_{k-N_1,1})  \nonumber  \\
&= I(\overline{W}_{k,2},\ldots, \overline{W}_{k-N_1,2};\overline{Z}_1,\ldots,\overline{Z}_{k-N_1-1}| \overline{W}_{k,1},\ldots,\overline{W}_{k-N_1,1})   \nonumber \\
&~~+I(\overline{W}_{k,2},,\ldots,\overline{W}_{k-N_1,2};\overline{Z}_{k-N_1},\ldots,\overline{Z}_k| \overline{W}_{k,1},\ldots,\overline{W}_{k-N_1,1},\overline{Z}_1,\ldots,\overline{Z}_{k-N_1-1}).    \label{eqn_lem2_1}
\end{align}

$\quad$Since $\overline{W}_{k,1},\ldots,\overline{W}_{k-N_1,1}$ is independent of $\overline{W}_{k,2},\ldots,\overline{W}_{k-N_1,2},\overline{Z}_1,\ldots,\overline{Z}_{k-N_1-1}$, the first term on the right equals
\begin{equation}
I(\overline{W}_{k,2},\overline{W}_{k-1,2},\ldots,\overline{W}_{k-N_1,2};\overline{Z}_1,\ldots,\overline{Z}_{k-N_1-1})=0. \label{eqn_lem2_2}
\end{equation}
The second term in the RHS of (\ref{eqn_lem2_1})
\begin{align}
&=I(\overline{W}_{k,2},\ldots,\overline{W}_{k-N_1,2};\overline{Z}_{k-N_1},\ldots,\overline{Z}_k| \overline{W}_{k,1},\ldots,\overline{W}_{k-N_1,1},\overline{Z}_1,\ldots,\overline{Z}_{k-N_1-1}) \nonumber \\
&=I(\overline{W}_{k,2},\ldots,\overline{W}_{k-N_1,2};\overline{Z}_{k-N_1,1},\ldots \overline{Z}_{k,1}|  \overline{W}_{k,1},\ldots,\overline{W}_{k-N_1,1},\overline{Z}_1,\ldots,\overline{Z}_{k-N_1-1})\nonumber  \\
&~~+I(\overline{W}_{k,2},\overline{W}_{k-1,2},\ldots,\overline{W}_{k-N_1,2}; \overline{Z}_{k-N_1,2},\ldots,\overline{Z}_{k,2}|\overline{W}_{k,1},\ldots,\overline{W}_{k-N_1,1},\notag \\
&~~~ \overline{Z}_1,\ldots,\overline{Z}_{k-N_1-1},\overline{Z}_{k-N_1,1},\ldots,\overline{Z}_{k,1}).     \label{eqn_lem2_3}
\end{align}

The first term on the right is zero because $(\overline{W}_{k,2}$,$\overline{W}_{k-1,2},\ldots,\overline{W}_{k-N_1,2})$ is independent of $(\overline{Z}_{k,1}$,$\ldots$,$\overline{Z}_{k-N_1,1})$, $(\overline{W}_{k,1},\ldots,\overline{W}_{k-N_1})$ and $\overline{Z}_1$,$\ldots$ $\overline{Z}_{k-N_1-1}.$ Also since $(\overline{W}_{k,1},\ldots,\overline{W}_{k-N_1,1})$ and 
$(\overline{Z}_{k,1},\ldots,\overline{Z}_{k-N_1,1})$ are independent of the other random variables in the second term on the right side, this term equals
\begin{align}
I(\overline{W}_{k,2},\overline{W}_{k-1,2},\ldots,\overline{W}_{k-N_1,2}; \overline{Z}_{k,2},\ldots,\overline{Z}_{k-N_1,2}| \overline{Z_1},\ldots,\overline{Z}_{k-N_1-1}). 
\end{align}
For convenience we denote it as $I(\widehat{W}_2;\widehat{Z}_2|\widehat{Z}_1)$ with $\widehat{W}_2$, $\widehat{Z}_2$, $\widehat{Z}_1$ denoting the respective sequences of random variables. Since
\begin{align}
I(\widehat{W}_2;\widehat{Z}_1,\widehat{Z}_2)&=I(\widehat{W}_2;\widehat{Z}_1)+I(\widehat{W}_2;\widehat{Z}_2|\widehat{Z}_1)  \nonumber \\
&=I(\widehat{W}_2;\widehat{Z}_2)+I(\widehat{W}_2;\widehat{Z}_1|\widehat{Z}_2),
\end{align}
and we have 
\begin{equation}
I(\widehat{W}_2,\widehat{Z}_1)=0=I(\widehat{W}_2;\widehat{Z}_2),   \label{eqn_lem2_0}
\end{equation}
and 
\begin{equation}
\widehat{Z}_1 \leftrightarrow (\widehat{W}_1, \widehat{W}_A,\widehat{W}_2) \leftrightarrow \widehat{Z}_2, \label{markov2}
\end{equation}
where $\widehat{W}_1=(\overline{W}_{k,1},\ldots,\overline{W}_{k-N_1,1})$, we get
\\
\begin{align}
I(\widehat{W}_2;\widehat{Z}_2|\widehat{Z}_1)&\overset{(a)}{=} I(\widehat{W}_2;\widehat{Z}_1|\widehat{Z}_2)  \nonumber \\
&\leq I(\widehat{W}_1,\widehat{W}_2, \widehat{W}_A;\widehat{Z}_1|\widehat{Z}_2) \nonumber \\
&\overset{(b)}{\leq} I(\widehat{W}_1,\widehat{W}_2,\widehat{W}_A;\widehat{Z}_1) \nonumber \\
&=I(\widehat{W}_1;\widehat{Z}_1)+I(\widehat{W}_2, \widehat{W}_A;\widehat{Z}_1|\widehat{W}_1)  \nonumber \\
&\overset{(c)}{=} 0 + I(\widehat{W}_2,\widehat{W}_A;\widehat{Z}_1) \nonumber \\
&=I(\widehat{W}_A;\widehat{Z}_1)+I(\widehat{W}_2;\widehat{Z}_1\lvert \widehat{W}_A) \nonumber \\
&\overset{(d)}{\leq} N_1n\epsilon +0=N_1n\epsilon,
\end{align}
where $(a)$ follow from (\ref{eqn_lem2_0}), $(b)$ follows from (\ref{markov2}) $(c)$ follows from $\widehat{W}_1 \perp (\widehat{W}_2, \widehat{W}_A, \widehat{Z}_1)$, $(d)$ follows from wiretap coding and the fact that the set $A$ will not have larger cardinality than $N_1$,  and $\widehat{W}_2 \perp (\widehat{Z}_1, \widehat{W}_A)$.

Therefore
\begin{equation}
I(\widehat{W}_2;\widehat{Z}_2|\widehat{Z}_1)\leq N_1n\epsilon.  \label{eqn_lem2_4}
\end{equation}

From (\ref{eqn_lem2_2}), (\ref{eqn_lem2_3}) and (\ref{eqn_lem2_4}), we get the lemma.

\section{Acknowledgments}
The first author would like to thank Deekshit PK for useful discussions.

\bibliographystyle{IEEEtran}
\bibliography{single_SHAHID_IET_SINGLE_USER_FEB_11}

\end{document}